\newcolumntype{C}{>{$}c<{$}}
\newtheorem{theorem}{Theorem}
\newtheorem{prop}{Proposition}
\newtheorem{lemma}{Lemma}
\newtheorem{rem}{Remark}
\newtheorem{exmp}{Example}
\begin{document}

\title[A Wigner-type theorem in terms of equivalent pairs of subspaces]{A non-surjective Wigner-type theorem in terms of equivalent pairs of subspaces}
\author{Mark Pankov}
\subjclass[2000]{15A86}
\keywords{Hilbert Grassmannian, linear or conjugate linear isometry, equivalent pairs of subspaces, principal angles}
\address{Faculty of Mathematics and Computer Science, University of Warmia and Mazury, S{\l}oneczna 54, Olsztyn, Poland}
\email{pankov@matman.uwm.edu.pl}

\maketitle

\begin{abstract}
Let $H$ be an infinite-dimensional complex Hilbert space and 
let ${\mathcal G}_{\infty}(H)$ be the set of all closed subspaces of $H$ whose dimension and codimension both are infinite.
We investigate (not necessarily surjective) transformations of ${\mathcal G}_{\infty}(H)$ sending every pair of subspaces to 
an equivalent pair of subspaces; two pairs of subspaces are equivalent if there is a linear isometry sending one of these pairs to the other.
Let $f$ be such a transformation. We show that there is a unique up to a scalar multiple linear or conjugate linear isometry $L:H\to H$ such that
for every $X\in {\mathcal G}_{\infty}(H)$ the image $f(X)$ is the sum of $L(X)$ and a certain closed subspace $O(X)$ orthogonal to the range of $L$.
In the case when $H$ is separable, we give the following sufficient condition to assert  that $f$ is induced by a linear or conjugate linear isometry:
if $O(X)=0$ for a certain  $X\in {\mathcal G}_{\infty}(H)$, then the same holds for all $X\in {\mathcal G}_{\infty}(H)$.
\end{abstract}

\section{Introduction}
The non-bijective version of Wigner's theorem states that every (not necessarily surjective) transformation
of the Grassmannian formed by $1$-dimensional subspaces of a complex Hilbert space (the set of pure states)
preserving the angle between subspaces (the transition probability) is induced by a linear or conjugate linear isometry. 
Moln\'ar \cite{Molnar1,Molnar2} (see also \cite{Molnar-book}) extended this statement onto Grassmannians of finite-dimensional subspaces:
every transformation of such a Grassmannian preserving the principal angles between subspaces is induced by a linear or conjugate linear isometry
(except in the case when the Grassmannian is formed by subspaces whose dimension is the half of the dimension of the corresponding Hilbert space, 
see  \cite{Molnar2} for the details).
Some generalizations of this result were obtained by Geh\'er \cite{Geher} and \v{S}emrl \cite{Semrl}.

It was noted in \cite[p. 66]{Molnar-book} that the principal angles can be defined for pairs of infinite-dimensional closed subspaces.
For our purpose it is sufficient to know that  the principal angles are equal for two pairs of subspaces if and only if 
there is a linear isometry sending one of these pairs to the other  \cite[p.78]{Molnar-book};
in this case, we say that the pairs are equivalent.

In the present paper, we consider  the Grassmannian  formed by all closed subspaces whose dimension and codimension both are infinite
and (not necessarily surjective) transformations of this Grassmannian which send every pair of subspaces to an equivalent pair.
For every such transformation there is a unique up to a scalar multiple linear or  conjugate linear isometry $L$ such that for every $X$ 
(belonging to the Grassmannian)
the image of $X$ is the sum of $L(X)$ and a certain closed subspace $O(X)$ orthogonal to the range of $L$ (Theorem \ref{th1});
note that $O(X)=O(Y)$ if $X,Y$ belong to the same component of the Grassmannian, i.e. the codimensions of $X\cap Y$ 
in $X,Y$ both are finite.
In the separable case, we show that $O(X)=0$ for all $X$ if this equality holds for a certain $X$ (Proposition \ref{prop})
and obtain a sufficient condition to assert  that our transformation is induced by a linear or conjugate linear isometry (Theorem \ref{th2}).
In the non-separable case, the Grassmannian contains subspaces of different dimensions, for this reason,
we are not able to prove the latter statement (Remark \ref{rem}).

\section{Results}
Let $H$ be an infinite-dimensional complex Hilbert space.
For every positive integer $k$ we denote by ${\mathcal G}_{k}(H)$ the Grassmannian of $k$-dimen\-sional subspaces of $H$
and write ${\mathcal G}^{k}(H)$ for the Grassmannian formed by closed subspaces of codimension $k$.
Let ${\mathcal G}_{\infty}(H)$ be the Grassmannian of all closed subspaces of $H$ whose dimension and codimension both are infinite.

We say that two pairs of closed subspaces of $H$ are  {\it equivalent} if there is a  linear isometry 
transferring one of these pairs to the other.
This notion is related to the concept of principal angles between subspaces  \cite[p.78]{Molnar-book}.

Let ${\mathcal G}$ be one of the above Grassmannians. 
If $f$ is a transformation of ${\mathcal G}$ sending every pair of subspaces to an equivalent pair of subspaces, then for any $X,Y\in {\mathcal G}$
there is a linear isometry $L_{\{X,Y\}}:H\to H$ such that 
$$\{f(X),f(Y)\}=\{L_{\{X,Y\}}(X),L_{\{X,Y\}}(Y)\};$$
the latter does not imply that $f(X)=L_{\{X,Y\}}(X)$ and $f(Y)=L_{\{X,Y\}}(Y)$.

By Moln\'ar \cite{Molnar1}, every such transformation of ${\mathcal G}_{k}(H)$
is induced by a linear or conjugate linear isometry of $H$
(for $k=1$ this follows from the non-bijective version of Wigner's theorem). 
Suppose that $f$ is a transformation of ${\mathcal G}^{k}(H)$ which sends every pair of subspaces to an equivalent pair of subspaces.
The same holds for the transformation of ${\mathcal G}_k(H)$ sending every $X\in {\mathcal G}_k(H)$ to $(f(X^{\perp}))^{\perp}$
and this transformation is induced by a linear or conjugate linear isometry $L:H\to H$.
Then for every $Y\in {\mathcal G}^{k}(H)$ we have $$f(Y)=L(Y)\oplus O,$$
where $O$ is the orthogonal complement of the range of $L$;
since the codimension of $L(Y)$ in the range of $L$ is equal to $k$, we obtain that $O=0$ and, consequently, $L$ is a unitary or anti-unitary operator on $H$.

A subset ${\mathcal C}\subset {\mathcal G}_{\infty}(H)$ is called a {\it component} of ${\mathcal G}_{\infty}(H)$ if
\begin{equation}\label{eq-comp}
{\rm codim}_{X}(X\cap Y)<\infty,\;\;\;{\rm codim}_{Y}(X\cap Y)<\infty
\end{equation}
for all $X,Y\in {\mathcal C}$ and ${\mathcal C}$  is maximal with respect to this property.
For each $X\in {\mathcal G}_{\infty}(H)$ there is a unique component containing $X$; it is formed by all $Y\in {\mathcal G}_{\infty}(H)$ satisfying \eqref{eq-comp}.

\begin{exmp}{\rm
Let $L:H\to H$ be a linear or conjugate linear isometry and let $O$ be a closed subspace orthogonal to  the range of $L$.
For a component ${\mathcal C}\subset {\mathcal G}_{\infty}(H)$
consider the map sending every $X\in {\mathcal C}$ to $L(X)\oplus O$.
This is a map  into a  component of ${\mathcal G}_{\infty}(H)$
transferring  every pair of subspaces to an equivalent pair of subspaces.
}\end{exmp}

\begin{theorem}\label{th1}
Let $f$ be a transformation of ${\mathcal G}_{\infty}(H)$ sending  every pair of subspaces to an equivalent pair of subspaces.
Then there is a  unique up to a scalar multiple linear or conjugate linear isometry $L:H\to H$ and for every $X\in {\mathcal G}_{\infty}(H)$
there is  a closed subspace $O(X)$ orthogonal to the range of $L$
such that 
$$f(X)=L(X)\oplus O(X).$$
If $X,Y$ belong to the same component of ${\mathcal G}_{\infty}(H)$, then $O(X)=O(Y)$.
Also, if $X,Y\in {\mathcal G}_{\infty}(H)$ and $Y\subset X$, then $O(Y)\subset O(X)$.
\end{theorem}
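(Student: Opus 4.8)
The plan is to exploit the structure theory of pairs of closed subspaces (the principal angles / the Halmos two-subspace decomposition) to transfer metric information from $H$ to the image of $f$, and then reconstruct $L$ on larger and larger subspaces. First I would observe that the equivalence of pairs $\{X,Y\}$ and $\{f(X),f(Y)\}$ is a very strong constraint: for a linear isometry $U$ with $U(X)=f(X)$, $U(Y)=f(Y)$, the operator $U$ restricts to isometries $X\to f(X)$ and $Y\to f(Y)$ and carries $X\cap Y$ onto $f(X)\cap f(Y)$, $X+Y$ onto $f(X)+f(Y)$, and preserves the ``angle operator'' between the two subspaces. In particular ${\rm codim}_X(X\cap Y)={\rm codim}_{f(X)}(f(X)\cap f(Y))$ and likewise with $X,Y$ interchanged, so $f$ maps each component of ${\mathcal G}_\infty(H)$ into a component. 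I would first prove the theorem for $X,Y$ inside a fixed component, where all the relevant codimensions are finite, and then glue components together.

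Next, within a fixed component ${\mathcal C}$ I would build the isometry $L$. The idea is to pick a ``spine'': a chain or web of subspaces in ${\mathcal C}$ whose intersections and sums generate, in the limit, all of $H$, and to use the pairwise isometries $L_{\{X,Y\}}$ to define $L$ coherently. Concretely, fix $X_0\in{\mathcal C}$; for any one-dimensional-perturbation $Y$ of $X_0$ (i.e.\ $X_0\cap Y$ has codimension $1$ in each), the pair data pins down how $f$ acts, and one extracts a partial isometry. The compatibility of these partial isometries on overlaps — this is where one invokes, in the background, the Molnár-type rigidity for the finite Grassmannians ${\mathcal G}_k$ via sections of the form $\{Z : X_0\cap Y\subset Z\subset X_0+Y\}$, which are finite-dimensional Grassmannians on which $f$ induces an angle-preserving transformation — forces the existence of a single linear or conjugate linear isometry $L$ with $f(X)\supset L(X)$ for all $X\in{\mathcal C}$, and $f(X)\ominus L(X)=:O(X)$ orthogonal to ${\rm ran}(L)$. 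The linear-versus-conjugate-linear dichotomy, and the ``up to a scalar multiple'' uniqueness, come out of the Wigner/Molnár input as usual. That $O(X)=O(Y)$ on a component: since $f(X)=L(X)\oplus O(X)$, $f(Y)=L(Y)\oplus O(Y)$, and $L$ already realizes $X\mapsto L(X)$, $Y\mapsto L(Y)$ as an equivalence of pairs, the extra orthogonal summands must match — comparing $\dim$ and the angle operator of $\{f(X),f(Y)\}$ against that of $\{L(X),L(Y)\}$ forces $O(X)=O(Y)$; more cleanly, $f(X)\cap f(Y)\supset L(X\cap Y)\oplus(O(X)\cap O(Y))$ and a codimension count closes it.

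To pass from a single component to all of ${\mathcal G}_\infty(H)$: any two components are linked by a finite chain of subspaces, or one argues directly that the isometry $L$ built on one component is forced to agree (up to the global scalar) with the one built on any other, because a subspace $X$ and a subspace $X'$ from a different component still sit inside a common larger structure where both constructions apply. So $L$ is global. Finally, for the monotonicity $Y\subset X\Rightarrow O(Y)\subset O(X)$: apply the pair condition to $\{X,Y\}$ itself. Since $Y\subset X$, any linear isometry $U$ with $\{U(X),U(Y)\}=\{f(X),f(Y)\}$ must satisfy $U(Y)\subset U(X)$ (the containment relation is preserved, and one rules out the swapped assignment by comparing dimensions/codimensions, both being infinite but with ${\rm codim}_X Y$ possibly finite or infinite — in either case $f(Y)\subset f(X)$). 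Hence $f(Y)=L(Y)\oplus O(Y)\subset L(X)\oplus O(X)=f(X)$; intersecting with $({\rm ran}\,L)^\perp$ and using $O(Y)\perp{\rm ran}\,L$, $L(Y)\subset L(X)\subset{\rm ran}\,L$, we get $O(Y)=f(Y)\cap({\rm ran}\,L)^\perp\subset f(X)\cap({\rm ran}\,L)^\perp=O(X)$.

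The main obstacle I anticipate is the coherent construction of the global $L$ from the merely pairwise isometries $L_{\{X,Y\}}$ — ensuring the partial isometries extracted from different pairs agree on their common domains, handling the linear/conjugate-linear choice uniformly (it cannot jump from pair to pair once fixed on one pair, but proving this requires a connectedness argument through finite-dimensional sections), and verifying that the resulting $L$ is a genuine (everywhere-defined, norm-preserving) linear map on all of $H$ rather than only on a dense subspace. The ``$f(X)\supset$ an isometric copy of $X$, with the defect orthogonal to a fixed range'' phenomenon is exactly the non-surjectivity feature that makes this harder than Molnár's surjective case, and the bookkeeping of the orthogonal defects $O(X)$ across overlapping components will require care.
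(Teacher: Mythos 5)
Your outline correctly identifies the shape of the theorem (a local-to-global construction of $L$, followed by bookkeeping of the defects $O(X)$), and your final deduction $O(Y)=f(Y)\cap({\rm ran}\,L)^{\perp}\subset f(X)\cap({\rm ran}\,L)^{\perp}=O(X)$ is exactly right \emph{once} one knows both that $L$ exists globally and that $f(Y)\subset f(X)$. But the two steps you lean on are, respectively, missing and incorrect. First, the existence of $L$ is the entire content of the theorem, and your text does not construct it: you describe a ``spine'' of pairwise isometries $L_{\{X,Y\}}$ and appeal to ``Moln\'ar-type rigidity'' on the sections $\{Z: X_0\cap Y\subset Z\subset X_0+Y\}$, then list the coherence of these partial isometries as ``the main obstacle I anticipate.'' That obstacle is the proof. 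Moreover, the proposed mechanism is shaky on its own terms: those sections are Grassmannians of \emph{finite-dimensional} quotients, where Moln\'ar's theorem has a genuine exceptional case ($k$-dimensional subspaces of a $2k$-dimensional space), and it is not established that $f$ induces a well-defined principal-angle-preserving self-map of such a section (a priori $f$ sends these $Z$ into subspaces of $H$ carrying an orthogonal defect, and one must first prove $f$ preserves inclusions and codimensions to even restrict it to an interval). The paper avoids all of this by working dually inside a fixed $X$: the map $Y\mapsto f(Y)$ on finite-codimension subspaces of $X$ induces a map $g$ on finite-dimensional subspaces of $X$ via orthocomplements, $g$ is shown to send lines to lines and to preserve orthogonality, and the Faure--Fr\"olicher--Havlicek form of the Fundamental Theorem of Projective Geometry (not Moln\'ar's theorem) produces $L_X$; the various $L_X$ are then glued using an auxiliary subspace $Z$ with $X\cap Z,\,Y\cap Z\in{\mathcal G}_{\infty}(H)$.

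Second, your argument that $Y\subset X$ forces $f(Y)\subset f(X)$ rather than $f(X)\subset f(Y)$ does not work: you propose to ``rule out the swapped assignment by comparing dimensions/codimensions,'' but every subspace in ${\mathcal G}_{\infty}(H)$ has infinite dimension and infinite codimension, and even when ${\rm codim}_X(Y)$ is finite the swapped assignment $U(Y)=f(X)$, $U(X)=f(Y)$ is numerically consistent --- it simply yields the reverse containment $f(X)\subset f(Y)$ with the same codimension. No cardinality count distinguishes the two. The paper breaks the symmetry with a third subspace: choose $Z\in{\mathcal G}_{\infty}(H)$ orthogonal to the smaller subspace and non-orthogonal to the larger one; since $f$ preserves orthogonality and non-orthogonality, $f(Z)$ is orthogonal to the image of the smaller and non-orthogonal to the image of the larger, which excludes the wrong containment. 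Without this (or some equivalent device), both the inclusion-preservation used throughout your construction and the final monotonicity claim $O(Y)\subset O(X)$ are unsupported.
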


Let $f$ be as in Theorem \ref{th1}. If $X\in {\mathcal G}_{\infty}(H)$, then
$f({\mathcal G}^1(X))$ consists of all closed hyperplanes of $f(X)$ containing $O(X)$.
Therefore, 
\begin{equation}\label{eq-th2}
f({\mathcal G}^1(X))={\mathcal G}^{1}(f(X))
\end{equation}
if and only if $O(X)=0$.
In the case when $H$ is separable, we show that $O(X)=0$ for all $X\in {\mathcal G}_{\infty}(H)$
if this equality holds for a certain $X\in {\mathcal G}_{\infty}(H)$ (Proposition \ref{prop}).  Theorem \ref{th1} and the above observation imply the following.

\begin{theorem}\label{th2}
If $H$ is separable, $f$ is a transformation of ${\mathcal G}_{\infty}(H)$ sending  every pair of subspaces to an equivalent pair of subspaces
and \eqref{eq-th2} holds for a certain $X\in {\mathcal G}_{\infty}(H)$, then $f$ is induced by a linear or conjugate linear isometry of $H$.  
\end{theorem}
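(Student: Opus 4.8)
The plan is to deduce Theorem~\ref{th2} directly from Theorem~\ref{th1}, the paragraph immediately following it, and Proposition~\ref{prop}. By Theorem~\ref{th1} there is a linear or conjugate linear isometry $L\colon H\to H$ and, for every $X\in{\mathcal G}_{\infty}(H)$, a closed subspace $O(X)$ orthogonal to the range of $L$ with $f(X)=L(X)\oplus O(X)$. As observed right after Theorem~\ref{th1}, $f({\mathcal G}^{1}(X))$ is exactly the family of closed hyperplanes of $f(X)$ containing $O(X)$, so \eqref{eq-th2} holds for a given $X$ if and only if $O(X)=0$. Hence the hypothesis of Theorem~\ref{th2} says precisely that $O(X_{0})=0$ for some $X_{0}\in{\mathcal G}_{\infty}(H)$. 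Proposition~\ref{prop} (the separable case) then gives $O(X)=0$ for \emph{all} $X\in{\mathcal G}_{\infty}(H)$, whence $f(X)=L(X)$ for every $X$; that is, $f$ is induced by the linear or conjugate linear isometry $L$, which is what Theorem~\ref{th2} asserts. So the real content, and the place where I expect the main obstacle, is Proposition~\ref{prop}.

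To prove Proposition~\ref{prop} I would exploit the two monotonicity statements from Theorem~\ref{th1}: $O$ is constant on every component of ${\mathcal G}_{\infty}(H)$, and $Y\subseteq X$ implies $O(Y)\subseteq O(X)$. Starting from $O(X_{0})=0$, every $Y\in{\mathcal G}_{\infty}(H)$ with $Y\subseteq X_{0}$ satisfies $O(Y)\subseteq O(X_{0})=0$, so $O(Y)=0$, and then $O$ vanishes on the whole component of each such $Y$. Iterating — pass to a component, shrink to a subspace, pass to its component, and so on — shows that $O\equiv 0$ on every component that can be joined to the component ${\mathcal C}_{0}$ of $X_{0}$ by a finite chain of moves of the form ``replace a subspace by a subspace of it'' or ``replace a subspace by one lying in the same component.'' The key assertion to establish is that, when $H$ is separable, \emph{every} component of ${\mathcal G}_{\infty}(H)$ is reachable from ${\mathcal C}_{0}$ in this way: given $W\in{\mathcal G}_{\infty}(H)$ one constructs a finite zig-zag ${\mathcal C}_{0}\rightsquigarrow\dots\rightsquigarrow{\mathcal C}_{W}$. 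Separability is essential here, since then every member of ${\mathcal G}_{\infty}(H)$ has both dimension and codimension $\aleph_{0}$, so a subspace can be shrunk inside $X_{0}$ and then ``re-grown'' in a position commensurable with a prescribed subspace of $W$; in the non-separable case the members of ${\mathcal G}_{\infty}(H)$ have various dimensions and this matching can fail (cf. Remark~\ref{rem}).

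The main obstacle is exactly this reachability/connectedness step. Moving \emph{up} in the inclusion order is not licensed by the monotonicity of $O$ — only finite-codimensional enlargements keep one inside a component — so one must route around it by first descending to a small subspace of a suitable common type and then ascending within a different component, and the bookkeeping needed to keep every intermediate subspace in ${\mathcal G}_{\infty}(H)$ (infinite dimension \emph{and} infinite codimension) while consecutive ones remain nested or commensurable is the technical heart of the argument. An alternative, should the zig-zag prove awkward for some components, would be to bound $O(W)$ more directly from the equivalence of pairs: for carefully chosen $W'$ (say with $W\cap W'=0$ and $W\oplus W'$ of infinite codimension) the pair $\{f(W),f(W')\}$ decomposes into its $\operatorname{ran}L$-part and its $(\operatorname{ran}L)^{\perp}$-part, whose principal angles must together reproduce those of $\{W,W'\}$, forcing constraints such as $O(W)\cap O(W')=0$ or $O(W)\perp O(W')$; combining enough such constraints with the already-established vanishing on reachable components would finish the proof. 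Once $O\equiv 0$ is proved, Theorem~\ref{th2} follows from the reduction in the first paragraph.
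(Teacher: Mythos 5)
Your reduction of Theorem \ref{th2} to Proposition \ref{prop} is exactly the paper's: \eqref{eq-th2} holds for $X$ if and only if $O(X)=0$, and Proposition \ref{prop} then gives $f(X)=L(X)$ for all $X$. The gap is that you do not prove Proposition \ref{prop} --- which you yourself identify as the real content --- and the primary strategy you propose for it provably cannot work. Your plan uses only the two monotonicity properties from Theorem \ref{th1} (constancy of $O$ on components, and $O(Y)\subset O(X)$ for $Y\subset X$) together with a claimed ``reachability'' of every component from that of $X_0$ by moves ``pass to a subspace'' or ``stay in a component''. Both moves only propagate the vanishing of $O$ downwards or sideways, and the reachability claim is false: fix $M\in{\mathcal G}_{\infty}(H)$ and a nonzero closed subspace $C$, and set $O'(Y)=0$ if $\dim(Y\cap M)<\infty$ and $O'(Y)=C$ otherwise. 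This $O'$ satisfies both monotonicity properties, vanishes on any $X_0\in{\mathcal G}_{\infty}(H)$ contained in $M^{\perp}$, and yet $O'(M)=C\neq 0$; hence $M$ is not reachable from such an $X_0$, and no argument using only those two properties can establish Proposition \ref{prop}.

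The missing ingredient is a way to go \emph{up} the inclusion order, and the paper extracts it from the equivalence of pairs rather than from monotonicity. Concretely: if $Y\cap X^{\perp}=0$ with $O(X)=0$, then $Y$ contains no nonzero subspace orthogonal to $X$, hence $f(Y)$ contains no nonzero subspace orthogonal to $f(X)=L(X)$; since $O(Y)\subset f(Y)$ is orthogonal to the range of $L$, this forces $O(Y)=0$. Lemma \ref{lemmaZ} --- the point where separability enters --- supplies $Q$ with $Q\cap X^{\perp}=Q^{\perp}\cap X^{\perp}=0$, whence $O(Q)=0$ and then $O(X^{\perp})=0$; combined with $Y\subset Y'\Rightarrow Y'^{\perp}\subset Y^{\perp}$ this yields the upward implication ($O(Y)=0$ and $Y\subset Y'$ imply $O(Y')=0$), after which the intersection argument via \cite[Lemma 3.20]{Pankov-book} finishes exactly as in your zig-zag. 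Your ``alternative'' of extracting constraints on $O(W)$ directly from the equivalence of pairs points in the right direction --- it is essentially what the paper does --- but as stated it is a plan, not a proof.
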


Remark \ref{rem} (at the end of the paper) explains why the same statement  is not proved for the case when $H$ is non-separable.

\begin{rem}{\rm
Under the assumption that $H$ is separable, \v{S}emrl \cite{Semrl} constructs a non-surjective transformation of 
${\mathcal G}_{\infty}(H)$ which preserves the infimum of principal angles and does not preserve the inclusion relation.
This transformation does not satisfy the condition that all pairs of subspaces go to equivalent pairs of subspaces.
}\end{rem}

\section{Proofs}
\subsection{Proof of Theorem \ref{th1}}
Let $f$ be a transformation of ${\mathcal G}_{\infty}(H)$ sending  every pair of subspaces to an equivalent pair of subspaces.
The following assertions are obvious:
\begin{enumerate}
\item[$\bullet$] $f$ is injective;
\item[$\bullet$] $f$ is orthogonality, compatibility  and ortho-adjacency preserving in both directions.
\end{enumerate}
Recall that two closed subspaces of $H$ are {\it compatible} if there is an orthonormal basis of $H$ such that each of these subspaces is spanned by a subset of this basis.
Compatible subspaces $X$ and $Y$ belonging to the same Grassmannian are called {\it ortho-adjacent}
if $X\cap Y$ is of codimension $1$ in both $X,Y$ \cite{PPZ, PT}.

\begin{lemma}\label{lemma0}
The transformation $f$ is inclusion preserving in both directions.
\end{lemma}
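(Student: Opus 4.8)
The plan is to derive inclusion-preservation from the structural properties already noted, namely that $f$ is injective and preserves orthogonality, compatibility, and ortho-adjacency in both directions. The key observation is that the inclusion relation $Y\subset X$ between elements of ${\mathcal G}_{\infty}(H)$ can be characterised purely in terms of these combinatorial/order-theoretic relations, so that any map preserving them in both directions must preserve inclusion. Concretely, I would first recall that compatibility of $X$ and $Y$ is equivalent to the existence of a common orthogonal decomposition, and that for compatible subspaces the lattice operations $X\cap Y$ and $X+Y$ (which is then closed) behave well; in particular $Y\subset X$ iff $X$ and $Y$ are compatible and $X\cap Y=Y$, equivalently $X+Y=X$.

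Next I would try to express $Y\subset X$ via ortho-adjacency and orthogonality alone, so as to transport it through $f$. One route: for $X,Y\in{\mathcal G}_\infty(H)$ with $Y\subset X$, build a chain $Y=Z_0, Z_1,\dots$ or rather approximate $X$ from $Y$ by a sequence of ortho-adjacency steps staying inside $X$; but since the codimension of $Y$ in $X$ may be infinite this must be done with care. A cleaner approach is to use the characterisation via orthogonal complements within a fixed ambient structure: $Y\subset X$ holds if and only if every closed subspace orthogonal to $X$ is orthogonal to $Y$, together with compatibility of $X$ and $Y$. Since $f$ preserves orthogonality in both directions and is injective, the condition ``$Z\perp X\Rightarrow Z\perp Y$'' is preserved provided we also know $f$ is surjective onto enough of the orthogonality structure — which we do not have in general. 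Therefore I expect the actual argument to go through ortho-adjacency: show that $Y\subsetneq X$ with finite codimension is detectable because one can interpolate a finite ortho-adjacency chain from $Y$ to $X$ all of whose members are orthogonal to a fixed complement, and then reduce the infinite-codimension case to the finite one by intersecting with suitable compatible subspaces.

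The main obstacle, as I see it, is the non-surjectivity of $f$: relations like ``there exists $Z$ with $Z\perp X$ and $Z\not\perp Y$'' are only preserved forward, not backward, when $f$ is not onto, so one cannot naively pull back lattice-theoretic descriptions of inclusion. The resolution should exploit that, although $f$ need not be surjective, it \emph{is} bijective when restricted to a maximal pencil of pairwise compatible subspaces (a Boolean algebra coming from a fixed orthonormal basis): on such a sublattice $f$ preserves compatibility, orthogonality and ortho-adjacency in both directions, hence is a lattice isomorphism onto its image, which forces it to preserve inclusion there; and every inclusion $Y\subset X$ in ${\mathcal G}_\infty(H)$ sits inside such a Boolean subalgebra. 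So the real work is (i) verifying that a maximal family of pairwise-compatible subspaces is carried to a maximal such family — using that compatibility is preserved in both directions and that ortho-adjacency pins down the atoms — and (ii) checking that an order-isomorphism between these Boolean algebras that additionally respects ortho-adjacency must respect the inclusion order rather than, say, the reverse order (which is ruled out because reversing order would send the whole space or the zero subspace into ${\mathcal G}_\infty(H)$ incorrectly, or more simply because ortho-adjacency together with the infinite-dimensionality constraints breaks the symmetry). Once inclusion is established on every such Boolean subalgebra, it holds globally since any pair $Y\subset X$ lies in one.
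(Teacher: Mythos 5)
There is a genuine gap: your proposal never uses the defining hypothesis on $f$ --- that each pair $\{X,Y\}$ is carried to an \emph{equivalent} pair, i.e.\ that there is a linear isometry $L_{\{X,Y\}}$ with $\{f(X),f(Y)\}=\{L_{\{X,Y\}}(X),L_{\{X,Y\}}(Y)\}$ --- and instead tries to rebuild inclusion from the derived relations (injectivity, orthogonality, compatibility, ortho-adjacency). That is both harder than necessary and, as proposed, does not close. The paper's argument is short: if $X\subset Y$, then, since a linear isometry carries the nested pair $\{X,Y\}$ onto $\{f(X),f(Y)\}$, one of $f(X),f(Y)$ must already be contained in the other; to fix the direction one picks $Z\in{\mathcal G}_{\infty}(H)$ with $Z$ orthogonal to $X$ and non-orthogonal to $Y$, notes that $f(Z)$ is orthogonal to $f(X)$ and non-orthogonal to $f(Y)$, and concludes that $f(Y)\subset f(X)$ is impossible, hence $f(X)\subset f(Y)$. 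The converse is symmetric: $f(X)\subset f(Y)$ forces, via the same isometry, one of $X,Y$ to contain the other, and the forward argument then pins down which. The direction-fixing step is exactly the ``forward-only'' orthogonality witness you mention; it suffices here precisely because containment of one image in the other is already guaranteed by the hypothesis, so nothing needs to be pulled back.

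Your proposed repair via maximal families of pairwise compatible subspaces fails at your step (i): a non-surjective $f$ need not carry a maximal such family to a maximal such family. For the model map $X\mapsto L(X)\oplus O$ with $L$ a non-surjective isometry and $O\neq 0$, every image contains $O$ and is contained in $L(H)\oplus O$, so any subspace orthogonal to $L(H)\oplus O$ is compatible with the entire image family yet lies outside it; maximality is lost. Without maximality you do not obtain a lattice isomorphism onto the image, and the backward transport of existential statements that your scheme requires is exactly what non-surjectivity destroys --- the obstacle you correctly identify but do not overcome.
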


\begin{proof}
Let $X,Y\in {\mathcal G}_{\infty}(H)$.
Suppose that $X\subset Y$. Then one of the subspaces $f(X),f(Y)$ is contained in the other. 
We take any $Z\in {\mathcal G}_{\infty}(H)$ orthogonal to $X$ and non-orthogonal to $Y$.
Then $f(Z)$ is orthogonal to $f(X)$ and non-orthogonal to $f(Y)$ which means that the inclusion $f(Y)\subset f(X)$ is impossible. 
Therefore, $f(X)\subset f(Y)$.

Conversely, if $f(X)\subset f(Y)$, then one of the subspaces $X,Y$ is contained in the other and the above arguments show that $X\subset Y$.
\end{proof}

It is clear that for any $X,Y\in {\mathcal G}_{\infty}(H)$ such that $Y\subset X$ we have
$${\rm codim}_X(Y)={\rm codim}_{f(X)}(f(Y)).$$
For every $X\in {\mathcal G}_{\infty}(H)$
we denote by ${\mathcal G}^{\rm fin}(X)$ 
the set  formed by all closed subspaces of $X$ whose codimension in $X$  is finite.

\begin{lemma}\label{lemma1}
For every $X\in {\mathcal G}_{\infty}(H)$ 
there is a unique up to a scalar multiple linear or conjugate-linear isometry $L_{X}:X\to f(X)$  such that
for every $Y\in {\mathcal G}^{\rm fin}(X)$ we have
$$f(Y)=L_{X}(Y)\oplus O(X),$$ 
where $O(X)=(L_X(X))^{\perp}\cap f(X)$ is the orthogonal complement of the range of $L_X$ in $f(X)$.
Furthermore, the following assertions are fulfilled:
\begin{enumerate}
\item[$\bullet$] if $Y\in {\mathcal G}_{\infty}(H)$ is contained in $X$, then $L_Y$ is a scalar multiple of the restriction of $L_X$ to $Y$ and $O(Y)\subset O(X)$;
\item[$\bullet$] $O(X)=O(Y)$ for every $Y$ belonging to the component of ${\mathcal G}_{\infty}(H)$ containing $X$.
\end{enumerate}
\end{lemma}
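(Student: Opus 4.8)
The plan is to reconstruct the isometry $L_X$ from the behaviour of $f$ on the hyperplanes of $X$ by means of the non-surjective version of Wigner's theorem, and then to deduce the bulleted statements from the fact that $f$ preserves compatibility in both directions.

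First I pass to a one‑dimensional picture inside $X$. For a unit vector $w\in X$ put $w^{\perp_{X}}:=X\ominus\langle w\rangle$; this is a hyperplane of $X$ lying in ${\mathcal G}_{\infty}(H)$, so, since $f$ preserves inclusion and codimension, $f(w^{\perp_{X}})$ is a closed hyperplane of $f(X)$. Define $g_{X}\colon{\mathcal G}_{1}(X)\to{\mathcal G}_{1}(f(X))$ by letting $g_{X}(\langle w\rangle)$ be the one‑dimensional orthogonal complement of $f(w^{\perp_{X}})$ in $f(X)$. Given unit vectors $w,w'\in X$, I take a linear isometry $\Phi$ of $H$ realizing the equivalence of the pairs $\{w^{\perp_{X}},w'^{\perp_{X}}\}$ and $\{f(w^{\perp_{X}}),f(w'^{\perp_{X}})\}$; since the sum of two distinct hyperplanes of $X$ is $X$ (and likewise for $f(X)$), $\Phi$ restricts to a unitary or anti‑unitary operator $X\to f(X)$ carrying $\langle w\rangle,\langle w'\rangle$ onto $g_{X}(\langle w\rangle),g_{X}(\langle w'\rangle)$ in one order or the other, whence $g_{X}$ preserves the transition probability. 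As $g_{X}$ is injective and orthogonality preserving we get $\dim X\le\dim f(X)$, so the non-surjective version of Wigner's theorem yields a linear or conjugate linear isometry $L_{X}\colon X\to f(X)$ with $g_{X}(\langle w\rangle)=\langle L_{X}(w)\rangle$. Setting $O(X):=f(X)\ominus L_{X}(X)$, a direct computation gives $f(w^{\perp_{X}})=L_{X}(w^{\perp_{X}})\oplus O(X)$; writing an arbitrary $Y\in{\mathcal G}^{\rm fin}(X)$ as an intersection of finitely many such hyperplanes and comparing codimensions in $f(X)$ then gives $f(Y)=L_{X}(Y)\oplus O(X)$. Uniqueness up to a unimodular scalar follows because a second such isometry induces the same $g_{X}$, and linear or conjugate linear isometries inducing a common projective map on a space of dimension $\ge 2$ differ by a scalar.

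The crucial step for the bullets is to prove, for an \emph{arbitrary} $Y\in{\mathcal G}_{\infty}(H)$ with $Y\subset X$, that $L_{X}(Y)\subseteq f(Y)$; for $Y$ of infinite codimension in $X$ one cannot pass from $X$ to $Y$ in finitely many codimension‑one steps, and this is exactly where compatibility enters. For a unit vector $w\in Y$ the subspaces $Y$ and $w^{\perp_{X}}$ are compatible, because $Y=(Y\cap w^{\perp_{X}})\oplus\bigl(Y\cap (w^{\perp_{X}})^{\perp}\bigr)=(Y\ominus\langle w\rangle)\oplus\langle w\rangle$. Hence $f(Y)$ and $f(w^{\perp_{X}})$ are compatible, so
$$f(Y)=\bigl(f(Y)\cap f(w^{\perp_{X}})\bigr)\oplus\bigl(f(Y)\cap f(w^{\perp_{X}})^{\perp}\bigr).$$
The first summand is $f(Y\ominus\langle w\rangle)$ (it contains it and is not all of $f(Y)$, as $Y\not\subset w^{\perp_{X}}$), and the second equals $f(Y)\cap\langle L_{X}(w)\rangle$, because $f(w^{\perp_{X}})^{\perp}\cap f(X)=\langle L_{X}(w)\rangle$. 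Since $f(Y\ominus\langle w\rangle)$ has codimension one in $f(Y)$, the second summand must be one‑dimensional, forcing $L_{X}(w)\in f(Y)$; as $w$ was arbitrary, $L_{X}(Y)\subseteq f(Y)$.

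I then combine this with the upper bound $f(Y)\subseteq L_{X}(Y)\oplus O(X)$, obtained by writing $Y$ as the intersection of all hyperplanes of $X$ containing it and intersecting the identities $f(w^{\perp_{X}})=L_{X}(w^{\perp_{X}})\oplus O(X)$. Together these give $f(Y)=L_{X}(Y)\oplus O'$ with $O':=f(Y)\ominus L_{X}(Y)\subseteq O(X)$; applying the same to every $Z\in{\mathcal G}^{\rm fin}(Y)$ and comparing codimensions in $f(Y)$ shows $f(Z)=L_{X}(Z)\oplus O'$ for all such $Z$, so $L_{X}|_{Y}$ and $O'$ satisfy the property characterizing $L_{Y}$ and $O(Y)$; by the uniqueness already proved, $L_{Y}$ is a scalar multiple of $L_{X}|_{Y}$ and $O(Y)=O'\subseteq O(X)$. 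When $Y$ has finite codimension in $X$ the codimension count forces $O'=O(X)$, and the statement about components follows by applying this to $X\cap Y$. The one delicate point is precisely the infinite‑codimension case of the first bullet, where the compatibility argument above replaces the unavailable step‑by‑step reduction.
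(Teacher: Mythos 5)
Your proposal is correct, and its overall architecture coincides with the paper's: pass from hyperplanes of $X$ to lines of $X$ via orthocomplements, produce $L_X$ from the induced map on lines, get the finite-codimension formula by intersecting hyperplanes and counting codimensions, and use preservation of compatibility to show $L_X(w)\in f(Y)$ for infinite-codimensional $Y\subset X$ — this last point is exactly the paper's key step as well. The one genuinely different ingredient is how you obtain $L_X$: the paper only extracts incidence data from $g_X$ (it sends lines of the projective space of $X$ into lines and preserves orthogonality) and invokes the Faure--Fr\"olicher--Havlicek version of the Fundamental Theorem of Projective Geometry followed by the characterization of orthogonality-preserving semilinear maps, whereas you restrict the pair-isometry realizing the equivalence of $\{w^{\perp_X},w'^{\perp_X}\}$ and $\{f(w^{\perp_X}),f(w'^{\perp_X})\}$ to a unitary $X\to f(X)$ (using that the sum of two distinct closed hyperplanes is everything), deduce that $g_X$ preserves transition probabilities, and invoke the non-surjective Wigner theorem. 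Your route exploits more of the hypothesis (full metric equivalence of pairs rather than just its incidence consequences) and is arguably shorter; the paper's route is more economical in what it assumes about pairs and fits the projective-geometric toolkit it reuses in Lemma 2. You also handle the first bullet differently — rather than comparing $L_X(P)$ and $L_Y(P)$ directly through orthocomplements of hyperplanes of $Y$, you verify that $L_X|_Y$ together with $O'=f(Y)\ominus L_X(Y)$ satisfies the defining property of $L_Y,O(Y)$ and appeal to the uniqueness clause; this is clean and valid, provided (as you do) you first establish $f(Z)=L_X(Z)\oplus O'$ for all $Z\in\mathcal{G}^{\rm fin}(Y)$, not just for $Y$ itself.
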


\begin{proof}
(1). Suppose that $f(X)=X'$.
Denote by ${\mathcal X}$ and ${\mathcal X}'$ the sets consisting of all elements of ${\mathcal G}_{\infty}(H)$ contained in $X$ and $X'$,
respectively. 
Let ${\mathcal X}^{\perp}$ and ${\mathcal X}'^{\perp}$ be the sets formed by the orthogonal complements of 
elements from  ${\mathcal X}$ and ${\mathcal X}'$ in $X$ and $X'$, respectively.
Consider  the map 
$g:{\mathcal X}^{\perp}\to {\mathcal X}'^{\perp}$ which sends every $Z\in {\mathcal X}^{\perp}$ to 
$$(f(Z^{\perp}\cap X))^{\perp}\cap X'.$$
This map is inclusion preserving in both directions (since $f$ is inclusion preserving in both directions);
furthermore, for every positive integer $k$ we have
$$f({\mathcal G}^k(X))\subset {\mathcal G}^k(X')$$
which implies that 
\begin{equation}\label{eq-g}
g({\mathcal G}_k(X))\subset {\mathcal G}_k(X').
\end{equation}
Observe that $A,B$ belonging to ${\mathcal G}^{1}(X)$ or ${\mathcal G}^{1}(X')$ are ortho-adjacent if and only if
$A^{\perp}\cap X, B^{\perp}\cap X$ or $A^{\perp}\cap X', B^{\perp}\cap X'$ 
(belonging to ${\mathcal G}_1(X)$ or ${\mathcal G}_1(X')$, respectively) are orthogonal.
Since $f$ is ortho-adjacency preserving, the restriction of $g$ to ${\mathcal G}_1(X)$ is orthogonality preserving.
Taking $k=2$ in \eqref{eq-g} we establish that $g|_{{\mathcal G}_1(X)}$ sends lines of the projective space associated to $X$
into lines of the projective space associated to $X'$.
Since the image of $g|_{{\mathcal G}_1(X)}$ is not contained in a line, 
$g|_{{\mathcal G}_1(X)}$ is induced by a semilinear injection of $X$ to $X'$ 
(Faure-Fr\"{o}licher-Havlicek's version of the Fundamental Theorem of Projective Geometry \cite{FF,Havlicek1}, see also \cite[Theorem 2.6]{Pankov-book}).
The transformation $g$ is orthogonality preserving and, consequently, this semilinear injection sends orthogonal vectors to orthogonal vectors which means 
that it is a scalar multiple of a linear or conjugate linear isometry \cite[Proposition 4.2]{Pankov-book}.
So, there is  a linear or conjugate-linear isometry $L_{X}:X\to X'$
such that 
$$g(Z)=L_X(Z)$$
for every $1$-dimensional subspace $Z\subset X$.

The same holds for every finite-dimensional subspace $Z\subset X$.
Indeed, 
$$L_X(P)=g(P)\subset g(Z)$$ for every $1$-dimensional subspace $P\subset Z$
which implies that $L_X(Z)\subset g(Z)$; therefore, $L_X(Z)$ coincides with $g(Z)$, since these subspaces are of the same finite dimension.

Then for every $Y\in {\mathcal G}^{\rm fin}(X)$ we have
$$f(Y)=(g(Y^{\perp}\cap X))^{\perp}\cap X'=(L_{X}(Y^{\perp}\cap X))^{\perp}\cap X'=L_{X}(Y)\oplus O(X),$$
where $O(X)=(L_X(X))^{\perp}\cap X'$ is the orthogonal complement of the range of $L_X$ in $X'$.
Observe that 
$$O(X)=\bigcap_{Y\in {\mathcal G}^{\rm fin}(X)}f(Y).$$
If there is a linear or conjugate linear isometry $L':X\to X'$ and a closed subspace $O'\subset X'$ orthogonal to the range of $L'$
such that 
$$f(Y)=L'(Y)\oplus O'$$
for all $Y\in {\mathcal G}^{\rm fin}(X)$, 
then 
$$O'=\bigcap_{Y\in {\mathcal G}^{\rm fin}(X)}f(Y)$$
and, consequently, $O(X)=O'$.
Since $O(X)=O'$ is orthogonal to the ranges of $L_X$ and $L'$,
we obtain that $L_X(Y)=L'(Y)$
for all $Y\in {\mathcal G}^{\rm fin}(X)$.

For every $1$-dimensional subspace $P\subset X$ 
the $1$-dimensional subspaces $L_{X}(P)$ and $L'(P)$ are the orthogonal complements  of $L_X(P^{\perp}\cap X)$ and $L'(P^{\perp}\cap X)$ (respectively) in $X'$.
The equality $L_X(P^{\perp}\cap X)=L'(P^{\perp}\cap X)$ implies that $L_{X}(P)=L'(P)$.
This guarantees that $L'$ is a scalar multiple of $L_X$.

(2). Our next step is to show that for every $Y\in {\mathcal  X}$ whose codimension in $X$ is infinite
there is a closed subspace $C(Y)$ orthogonal to the range of $L_X$ and such that
$$f(X)=L_X(Y)\oplus C(Y).$$
It will be shown latter that $C(Y)=O(Y)$.

For every $1$-dimensional subspace $P\subset Y^{\perp}\cap X$ we have 
$$L_X(P)=g(P)\subset g(Y^{\perp}\cap X)$$
which implies that 
$$L_X(Y^{\perp}\cap X)\subset g(Y^{\perp}\cap X)$$
and 
$$g(Y^{\perp}\cap X)=L_X(Y^{\perp}\cap X)\oplus A,$$
where $A$ is the orthogonal complement of $L_X(Y^{\perp}\cap X)$ in $g(Y^{\perp}\cap X)$.
We assert that $A$ is orthogonal to the range of $L_X$.

Let $P$ be a $1$-dimensional subspace of $Y$.
The $P^{\perp}\cap X$ and $Y$ are compatible and $Y$ is not contained in $P^{\perp}\cap X$.
Then $f(P^{\perp}\cap X)$ and $f(Y)$ are compatible and $f(Y)$ is not contained in $f(P^{\perp}\cap X)$.
The latter implies that the $1$-dimensional subspace
$$L_X(P)=g(P)=(f(P^{\perp}\cap X))^{\perp}\cap X'$$
is orthogonal to 
$$g(Y^{\perp}\cap X)=(f(Y))^{\perp}\cap X'.$$
Therefore, every $1$-dimensional subspace of $L_X(Y)$ is orthogonal to $g(Y^{\perp}\cap X)$
which means that $L_X(Y)$ is orthogonal to $A\subset g(Y^{\perp}\cap X)$.
So, $A$ is orthogonal to both $L_X(Y)$ and $L_X(Y^{\perp}\cap X)$
whose sum is  the range of $L_X$.

Since
$$g(Y^{\perp}\cap X)=L_X(Y^{\perp}\cap X)\oplus A$$
is contained in 
$$X'=f(X)=L_{X}(X)\oplus O(X)$$
and $A,O(X)$ are orthogonal to the range of $L_X$,
we obtain that $A\subset O(X)$; furthermore, 
$$f(Y)=(g(Y^{\perp}\cap X))^{\perp}\cap X' =L_{X}(Y)\oplus (A^{\perp}\cap O(X))$$
and $C(Y)=A^{\perp}\cap O(X)$ is as required.

(3).
Now, let $Y$ be an arbitrary element of ${\mathcal X}$.

We take a $1$-dimensional subspace $P\subset Y$ and denote by $Z$ the orthogonal complement of $P$ in $X$. 
Then $Z\cap Y$ is the orthogonal complement of $P$ in $Y$.
Observe  that $L_X(P)$ is the orthogonal  complement of 
$$f(Z)=L_X(Z)\oplus O(X)\;\mbox{ in }\;f(X)=L_X(X)\oplus O(X)$$
and $L_Y(P)$ are the orthogonal  complement of 
$$f(Z\cap Y)=L_Y(Z\cap Y)\oplus O(Y)\;\mbox{ in }\; f(Y)=L_Y(Y)\oplus O(Y).$$
It was established above that
$$f(Y)=L_X(Y)\oplus A,$$
$$f(Z\cap Y)=L_X(Z\cap Y)\oplus B,$$
where $A,B$ are closed subspaces orthogonal to the range of $L_X$
(we have  $A=B=O(X)$ if $Y\in {\mathcal G}^{\rm fin}(X)$ and $A=C(Y),B=C(Z\cap Y)$ if the codimension of $Y$ in $X$ is infinite).
Since $f$ is inclusion preserving and the subspaces $A,B,O(X)$ are orthogonal to the range of $L_X$, we obtain that $B\subset A\subset O(X)$.
The codimension of $f(Z\cap Y)$ in  $f(Y)$ and the codimension of $L_X(Z\cap Y)$ in $L_X(Y)$ 
both are equal to $1$ which guarantees that $A=B$.
This means that 
$$f(Z)=L_X(Z)\oplus O(X)$$ 
intersects 
$$f(Y)=L_X(Y)\oplus A$$
precisely in 
$$f(Z\cap Y)=L_X(Z\cap Y)\oplus A.$$
Recall that $L_X(P)$ is the orthogonal complement of $f(Z)$ in $f(X)$ and $L_Y(P)$ is the orthogonal complements of $f(Z\cap Y)$ in $f(Y)$.
Each of the $1$-dimensional subspaces $L_X(P),L_Y(P)$ is contained in $f(Y)$ and, consequently,
these $1$-dimensional subspaces are coincident.

So, $L_X(P)=L_Y(P)$ for every $1$-dimensional subspace $P\subset Y$ which means that the restriction of $L_X$ to $Y$ is a scalar multiple of $L_Y$.

If $Y\in {\mathcal G}^{\rm fin}(X)$, then
$$L_{X}(Y)\oplus O(X)=f(Y)=L_Y(Y)\oplus O(Y)=L_X(Y)\oplus O(Y)$$
which means that $O(X)=O(Y)$ (since  $O(X),O(Y)$ both  are orthogonal to $L_X(Y)$).
As a consequence, we obtain that 
$$O(X)=O(X\cap Y')=O(Y')$$
for every $Y'$ belonging to the component of ${\mathcal G}_{\infty}(H)$ containing $X$.

If the codimension of $Y$ in $X$ is infinite, then
$$L_Y(Y)\oplus C(Y)=L_X(Y)\oplus C(Y)=f(Y)=L_Y(Y)\oplus O(Y)$$
which implies that $O(Y)=C(Y)$
(since $C(Y),O(Y)$ both are orthogonal to the range of $L_Y$) and, consequently, $O(Y)=C(Y)\subset O(X)$.
\end{proof}

\begin{lemma}\label{lemma2}
There is a linear or a conjugate-linear isometry $L:H\to H$ such that 
for every $X\in {\mathcal G}_{\infty}(H)$ the restriction of $L$ to $X$ is a scalar multiple of $L_X$
and $O(X)$ is orthogonal to the range of $L$.
\end{lemma}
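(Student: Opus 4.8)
The plan is to glue the local maps $L_X$ of Lemma~\ref{lemma1} into a single one. By Lemma~\ref{lemma1} the family $\{L_X\}_{X\in\mathcal G_\infty(H)}$ is already coherent under inclusion up to scalars: if $Y\subseteq X$ then $L_Y$ is a scalar multiple of $L_X|_Y$. So what has to be done is to rescale the $L_X$ by unimodular scalars so that they agree on overlaps, and then to verify the two properties required of the resulting operator. I would first record an elementary fact about the geometry of $\mathcal G_\infty(H)$ that is used repeatedly below: for any $X,Y\in\mathcal G_\infty(H)$ there is $Z\in\mathcal G_\infty(H)$ with $X\subseteq Z$ and $\dim(Z\cap Y)=\infty$. (If $Y\subseteq X$, take $Z=X$; otherwise the orthogonal projection of $Y$ onto $X^\perp$ is infinite-dimensional, so it contains an orthonormal sequence $f_1,f_2,\dots$; put $N=\overline{\operatorname{span}}(f_2,f_4,\dots)$ and $Z=N^\perp$. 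Then $X\subseteq Z$, $Z$ has infinite codimension, and $Z\cap Y$ is infinite-dimensional because its projection onto $X^\perp$ contains $f_1,f_3,\dots$.) In particular any two members of $\mathcal G_\infty(H)$ are joined by a fence $X\subseteq Z\supseteq Z\cap Y\subseteq Y$, so the relation ``$X\cap Y$ is infinite-dimensional'' generates a connected graph on $\mathcal G_\infty(H)$; and, since two subspaces of a common $X^*\in\mathcal G_\infty(H)$ that meet in a finite-dimensional subspace necessarily both have infinite codimension in $X^*$, the same construction carried out inside $X^*$ supplies bridges $Z\subseteq X^*$ with $\dim(Z\cap Y_i)=\infty$ between any two subspaces $Y_1,Y_2$ of $X^*$.

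I would then patch the scalars, which is the crux. It suffices to produce a coherent system of representatives $\{\ell_X\}_{X\in\mathcal G_\infty(H)}$, i.e. a choice of $\ell_X$ in the class of $L_X$ with $\ell_Y=\ell_X|_Y$ whenever $Y\subseteq X$. This I would do by Zorn's lemma, working with pairs $(\mathcal S,\ell)$ consisting of a subfamily $\mathcal S\subseteq\mathcal G_\infty(H)$ that is closed under passing to $\mathcal G_\infty$-subspaces and under those intersections and sums that remain in $\mathcal G_\infty(H)$, together with a coherent system $\ell$ on $\mathcal S$; unions of chains of such pairs are again of this type, so a maximal pair $(\mathcal S,\ell)$ exists. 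If $\mathcal S\neq\mathcal G_\infty(H)$, pick $X^*\notin\mathcal S$ and enlarge $\mathcal S$ by $X^*$ and by all $\mathcal G_\infty$-subspaces, intersections and sums generated by $X^*$ together with $\mathcal S$. The only genuine obstruction to extending $\ell$ over this enlargement is that the unimodular constants relating $L_{X^*}$ (and the new subspaces) to the already-fixed representatives have to be consistent; because a scalar multiple of a linear or conjugate linear isometry is determined by its restriction to any nonzero subspace, consistency follows from the coherence of $\ell$ on $\mathcal S$ once the relevant pieces of $\mathcal S$ are linked by subspaces on which $\ell$ is already defined and which meet the given pieces in infinite-dimensional subspaces --- and such links are produced by the bridges of the fact above (of the form $\overline{Y_1'+Y_2'}$) together with the closure properties imposed on $\mathcal S$. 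Contradicting maximality, $\mathcal S=\mathcal G_\infty(H)$. The same discussion shows that all the $L_X$ are simultaneously linear or simultaneously conjugate linear.

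With a coherent system $\{\ell_X\}$ at hand, every vector lies in some $X\in\mathcal G_\infty(H)$, so $L(v):=\ell_X(v)$ (for any such $X$) is well defined, and any finite set of vectors lies in a single $X\in\mathcal G_\infty(H)$, where $L$ coincides with $\ell_X$; hence $L$ is additive, (conjugate-)homogeneous and norm-preserving, so it is a linear or conjugate linear isometry $L:H\to H$, and $L|_X=\ell_X$ is a scalar multiple of $L_X$ for every $X$. Finally, for the orthogonality: $O(X)$ is orthogonal to $L_X(X)=L(X)$ by Lemma~\ref{lemma1}; and for an arbitrary $w\in H$ the subspace $\widetilde X:=\mathbb C w+\bigl(X\cap(\mathbb C w)^\perp\bigr)$ lies in $\mathcal G_\infty(H)$, contains $w$, and lies in the component of $X$, so by Lemma~\ref{lemma1} we get $O(X)=O(\widetilde X)\perp L_{\widetilde X}(\widetilde X)\ni L(w)$. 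Thus $O(X)$ is orthogonal to $L(H)$, which is a closed subspace, hence to the range of $L$.

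The hard part is the middle step: converting the inclusion-coherent family of classes $\{[L_X]\}$ into a genuinely coherent family of maps --- that is, choosing the unimodular scalars consistently (the vanishing of the corresponding cocycle). This rests entirely on the rigidity packaged into Lemma~\ref{lemma1} and on the elementary connectivity and bridging properties of $\mathcal G_\infty(H)$ recorded in the first step; the bookkeeping needed to keep the growing family $\mathcal S$ closed under the operations used for bridging is routine but should be carried out with some care.
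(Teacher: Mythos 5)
Your overall strategy --- glue the local maps $L_X$ of Lemma~\ref{lemma1} into a global isometry by rescaling them coherently --- is genuinely different from the paper's, and most of the surrounding material is sound: the ``fence'' fact (with the small repair that when the projection of $Y$ onto $X^\perp$ is finite-dimensional one simply takes $Z=X$, since then $X\cap Y$ is already infinite-dimensional), the well-definedness and additivity of $L$ \emph{once} a coherent system $\{\ell_X\}$ exists, and the final orthogonality argument via $\widetilde X=\mathbb C w+\bigl(X\cap(\mathbb C w)^\perp\bigr)$ all check out. The problem is that the step you yourself identify as the crux --- the consistent choice of the unimodular scalars --- is asserted rather than proved, and it is precisely the mathematical content of the lemma. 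In the Zorn extension step, adjoining $X^*$ forces a single scalar $\lambda$ with $\ell_Y=\lambda L_{X^*}|_Y$ for \emph{every} $Y\in\mathcal S$ comparable with $X^*$; each such $Y$ determines $\lambda$ uniquely, and the claim that all these determinations agree is exactly the statement that the transport of scalars along zigzags of inclusions is path-independent (vanishing of a $U(1)$-valued cocycle on the inclusion graph). ``Linking the pieces by bridges'' reduces long loops to shorter ones but does not by itself kill the holonomy around them; one would have to show, for instance, that for $X,Y$ with $X\cap Y$ infinite-dimensional the ratio $c(X,X\cap Y)^{-1}c(Y,X\cap Y)$ (where $L_W=c(Z,W)L_Z|_W$ for $W\subseteq Z$) is independent of the route joining $X$ to $Y$, and then propagate through bridges. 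This is provable, but it is real work that the proposal replaces with the phrase ``consistency follows from the coherence of $\ell$ on $\mathcal S$.''

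The paper sidesteps this issue entirely, and it is worth seeing how: instead of gluing the operators, it glues only their \emph{projective} shadows. For a $1$-dimensional $P$ it shows $L_X(P)=L_Y(P)$ as subspaces for any $X,Y\ni P$ (using the same connectivity idea, via \cite[Lemma 3.20]{Pankov-book}, but at a level where scalars are invisible), obtaining a single well-defined map $h$ on $\mathcal G_1(H)$ that sends lines into lines and preserves orthogonality; one global application of the Faure--Fr\"olicher--Havlicek fundamental theorem plus \cite[Proposition 4.2]{Pankov-book} then produces $L$ at once, and the scalar coherence you are struggling to arrange comes for free from the uniqueness clause of Lemma~\ref{lemma1}. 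I would either carry out the path-independence argument in full or, more economically, adopt the projective route.
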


\begin{proof}
For a $1$-dimensional subspace $P\subset H$ consider distinct  $X,Y\in {\mathcal G}_{\infty}(H)$ containing $P$.
If $X\cap Y$ belongs to ${\mathcal G}_{\infty}(H)$, then 
$$L_{X}(P)=L_{X\cap Y}(P)=L_{Y}(P)$$
(by Lemma \ref{lemma1}).
In the case when $X\cap Y$ is finite-dimensional, 
there is $Z\in{\mathcal G}_{\infty}(H)$ such that $X\cap Z$ and $Y\cap Z$ both belong to ${\mathcal G}_{\infty}(H)$
\cite[Lemma 3.20]{Pankov-book} and Lemma \ref{lemma1} implies that
$$L_X(P)=L_{X\cap Z}(P)=L_Z(P)=L_{Y\cap Z}(P)=L_Y(P).$$
Consider the transformation $h$ of ${\mathcal G}_1(H)$ defined as follows: 
for every $1$-dimensional subspace $P\subset H$ we take any $X\in {\mathcal G}_{\infty}(H)$ containing $P$ and set $h(P)=L_X(P)$
(it was established above that $h(P)$ does not depend on the choose of $X$).
This is a transformation of the projective space associated to $H$ sending lines into lines and
the image of $h$ is not contained in a line;
furthermore, $h$ is orthogonality preserving. 
This means that $h$ is induced by a linear or conjugate-linear isometry $L:H\to H$
\cite[Theorem 2.6 and Proposition 4.2]{Pankov-book}.
By our construction, the restriction of $L$ to any $X\in {\mathcal G}_{\infty}(H)$ is a scalar multiple of $L_X$. 

For every $X\in {\mathcal G}_{\infty}(H)$ the subspaces
$$f(X)=L(X)\oplus O(X)\;\mbox{ and }\; f(X^{\perp})=L(X^{\perp})\oplus O(X^{\perp})$$
are orthogonal. 
In particular, $O(X)$ is orthogonal to $L(X^{\perp})$.
Since the range of $L$ is the sum of $L(X)$ and $L(X^{\perp})$,
the subspace $O(X)$ is orthogonal to the range of $L$.
\end{proof}

Theorem \ref{th1} follows immediately from Lemmas \ref{lemma1} and \ref{lemma2}.

\subsection{Proof of Theorem \ref{th2}}
As above, we assume that $f$ is a transformation of ${\mathcal G}_{\infty}(H)$ sending every pair of subspaces to an equivalent pair of subspaces.
By Theorem \ref{th1}, for $X\in {\mathcal G}_{\infty}(H)$ we have
$$
f({\mathcal G}^1(X))={\mathcal G}^{1}(f(X))
$$
if and only if $O(X)=0$.
Theorem \ref{th2} is a consequence of  the following.

\begin{prop}\label{prop}
If $H$ is separable and $O(X)=0$ for a certain $X\in {\mathcal G}_{\infty}(H)$,  then the same holds for all $X\in {\mathcal G}_{\infty}(H)$.
\end{prop}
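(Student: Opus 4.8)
The plan is to use the hypothesis on $f$ beyond what Theorem \ref{th1} extracts from it. Since $f$ sends every pair to an equivalent pair, for all $X,W\in{\mathcal G}_{\infty}(H)$ the pairs $(X,W)$ and $(f(X),f(W))=(L(X)\oplus O(X),\,L(W)\oplus O(W))$ have the same principal angles. Recall that the principal angles of a pair $(A,B)$ are encoded by $\dim(A\cap B)$, the unordered pair of defect dimensions $\dim(A^{\perp}\cap B)$ and $\dim(A\cap B^{\perp})$, and the spectrum of the angle operator on the generic part, and that these data are preserved by arbitrary (not necessarily surjective) linear isometries. Suppose $O(X)=0$. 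Since $O(W)$ is orthogonal to the range of $L$, one computes in $H$ that
$f(X)^{\perp}\cap f(W)=L(X^{\perp}\cap W)\oplus O(W)$, $f(X)\cap f(W)^{\perp}=L(X\cap W^{\perp})$ and $f(X)\cap f(W)=L(X\cap W)$, while the angle operator $P_{f(X)}P_{f(W)}P_{f(X)}=P_{L(X)}P_{L(W)}P_{L(X)}$ is unchanged. Hence equality of principal angles forces the multiset identity $\{\dim(X^{\perp}\cap W)+\dim O(W),\,\dim(X\cap W^{\perp})\}=\{\dim(X^{\perp}\cap W),\,\dim(X\cap W^{\perp})\}$, and a short case analysis of this identity yields: \emph{if $O(X)=0$ and $\dim(X^{\perp}\cap W)<\infty$, then $O(W)=0$}. (If $\dim(X^{\perp}\cap W)$ is infinite it equals $\aleph_{0}$ by separability and the multiset identity gives no information; this is exactly where the separable and non-separable situations diverge.)

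Granting this reduction, the rest is a connectivity argument. First apply it to the given $X_{0}$: every $W$ with $\dim(X_{0}^{\perp}\cap W)<\infty$ has $O(W)=0$. In particular $O(Z)=0$ for $Z=\{x+Tx:x\in X_{0}\}$, the graph of any bounded $T\colon X_{0}\to X_{0}^{\perp}$, since then $Z\cap X_{0}^{\perp}=0$. Choosing $T$ with dense range, its complement $Z^{\perp}=\{-T^{*}w+w:w\in X_{0}^{\perp}\}$ satisfies $Z^{\perp}\cap X_{0}^{\perp}=\{-T^{*}w+w:\,T^{*}w=0\}=0$, so applying the boxed implication with $Z$ in place of $X_{0}$ gives $O(X_{0}^{\perp})=0$. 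Thus $O(W)=0$ whenever $\dim(X_{0}^{\perp}\cap W)<\infty$ \emph{or} $\dim(X_{0}\cap W)<\infty$.

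To conclude, fix an arbitrary $Y\in{\mathcal G}_{\infty}(H)$. Using separability — so that $X_{0}$, $X_{0}^{\perp}$, $Y$, $Y^{\perp}$ all have dimension $\aleph_{0}$ — I would construct $Z\in{\mathcal G}_{\infty}(H)$, namely the graph over $X_{0}$ (or over $X_{0}^{\perp}$) of a suitably chosen, possibly unbounded, closed densely defined operator whose adjoint has domain meeting $Y$ trivially, so that $Z\cap X_{0}^{\perp}=0$ (whence $O(Z)=0$ by the previous paragraph) while $Z^{\perp}\cap Y=0$. The boxed implication applied to $Z$ then gives $O(Y)=0$, as desired. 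When $\dim(X_{0}^{\perp}\cap Y)$ or $\dim(X_{0}\cap Y)$ is already finite no intermediate $Z$ is needed; the construction is only required for $Y$ that is ``doubly infinite'' relative to $X_{0}$, e.g. $Y\supset X_{0}$ with $\operatorname{codim}_{Y}X_{0}=\infty$.

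The main obstacle is precisely this last geometric step: realizing the given $Y$ as transverse, from one side, to a subspace reachable from $X_{0}$ through the propagation rule above. In the separable case there is enough room for this (graphs of operators between $\aleph_{0}$-dimensional subspaces, allowing unbounded ones to control the domain of the adjoint), and it is here that separability is indispensable: in a non-separable $H$ the subspaces $X_{0}$ and $Y$ may have incomparable infinite dimensions, so no such bridging $Z$ need exist — which is why the statement is not proved in that generality (Remark \ref{rem}).
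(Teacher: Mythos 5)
Your first two steps are correct and essentially reproduce the paper's argument: the implication ``$O(X)=0$ and $\dim(X^{\perp}\cap W)<\infty$ imply $O(W)=0$'' is a mild strengthening of the paper's key observation (which treats the case $X^{\perp}\cap W=0$, derived directly from the existence of an isometry matching the pairs), and your graph of a dense-range $T\colon X_{0}\to X_{0}^{\perp}$ is exactly the subspace $Q$ of Lemma \ref{lemmaZ} (there $T$ is the unitary $x_{i}\mapsto x'_{i}$), used in the same way to get $O(X_{0}^{\perp})=0$. The problem is the last step. For an arbitrary $Y\in{\mathcal G}_{\infty}(H)$ you need a bridging $Z$ with $\dim(Z\cap X_{0}^{\perp})<\infty$ and $\dim(Z^{\perp}\cap Y)<\infty$, and you only say you ``would construct'' it. This is not a routine omission: your own framework shows that no graph over $X_{0}$ works when $X_{0}\subset Y$ with ${\rm codim}_{Y}X_{0}=\infty$ (then $Z^{\perp}\cap Y\supseteq$ an infinite-dimensional piece forced by $X_{0}\subset Y$ is not controlled), and a graph over $X_{0}^{\perp}$, bounded or closed densely defined, runs into the same difficulty when $Y\cap X_{0}^{\perp}$ has small codimension in $X_{0}^{\perp}$: the condition $S^{*}x\in Y\cap X_{0}^{\perp}$ then cuts out an infinite-dimensional set of $x$. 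So the existence of such a $Z$ for every $Y$ is a nontrivial claim that you neither prove nor reduce to something known; as stated, the proof has a genuine gap exactly where you flag ``the main obstacle.''

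The paper closes this gap without any such transversality construction, by exploiting two facts you never use from Theorem \ref{th1}: the monotonicity $Y\subset X\Rightarrow O(Y)\subset O(X)$, and the constancy of $O$ on components. Once $O(X_{0})=0$ and $O(X_{0}^{\perp})=0$ are known, monotonicity kills $O$ on every element of ${\mathcal G}_{\infty}(H)$ contained in $X_{0}$, and passing to orthocomplements (again via Lemma \ref{lemmaZ}) kills it on every element containing $X_{0}$; an arbitrary $Y$ is then reached through $X_{0}\cap Y$ if that lies in ${\mathcal G}_{\infty}(H)$, and otherwise through an auxiliary $Z$ with $X_{0}\cap Z$ and $Y\cap Z$ both in ${\mathcal G}_{\infty}(H)$, whose existence is the already-established \cite[Lemma 3.20]{Pankov-book}. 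If you want to salvage your route, the cleanest fix is to import these two ingredients rather than to attempt the two-sided transversal $Z$ directly.
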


To prove Proposition \ref{prop} we use the following lemma.

\begin{lemma}\label{lemmaZ}
If $H$ is separable, then
for every $X\in {\mathcal G}_{\infty}(H)$ there is $Q\in {\mathcal G}_{\infty}(H)$ such that each of $Q,Q^{\perp}$ intersects $X^{\perp}$ precisely in $0$.
\end{lemma}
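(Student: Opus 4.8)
The plan is to build $Q$ by hand from an orthonormal basis adapted to the orthogonal decomposition $H=X\oplus X^{\perp}$. Since $X\in{\mathcal G}_{\infty}(H)$ and $H$ is separable, both $X$ and $X^{\perp}$ are infinite-dimensional and separable, so we may fix orthonormal bases $\{e_{n}\}_{n\in{\mathbb N}}$ of $X^{\perp}$ and $\{f_{n}\}_{n\in{\mathbb N}}$ of $X$. Put
$$u_{n}=\tfrac{1}{\sqrt2}(e_{n}+f_{n}),\qquad u_{n}'=\tfrac{1}{\sqrt2}(e_{n}-f_{n})$$
and let $Q$ be the closed linear span of $\{u_{n}:n\in{\mathbb N}\}$. (Any ``rotation'' $u_{n}=ae_{n}+bf_{n}$, $u_{n}'=-\bar b e_{n}+\bar a f_{n}$ with $a,b\neq0$ and $|a|^{2}+|b|^{2}=1$ would work equally well; the point is that each of $u_{n},u_{n}'$ has nonzero component along $X$.)

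The first thing I would check is that $Q$ and $Q^{\perp}$ both lie in ${\mathcal G}_{\infty}(H)$ and that $Q^{\perp}$ is the closed span of $\{u_{n}':n\in{\mathbb N}\}$. Within each two-dimensional subspace $V_{n}={\rm span}\{e_{n},f_{n}\}$ the pair $\{u_{n},u_{n}'\}$ is an orthonormal basis, and the $V_{n}$ are pairwise orthogonal, so $\{u_{n}\}\cup\{u_{n}'\}$ is an orthonormal basis of $H$; in particular $\{u_{n}\}$ is an infinite orthonormal set, whence $\dim Q=\aleph_{0}$, the subspaces $Q$ and $\overline{{\rm span}}\{u_{n}'\}$ are orthogonal with dense sum, so (being an orthogonal sum of closed subspaces, hence closed) $Q^{\perp}=\overline{{\rm span}}\{u_{n}'\}$ and ${\rm codim}\,Q=\dim Q^{\perp}=\aleph_{0}$. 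The same computation with the roles of $\{u_{n}\}$ and $\{u_{n}'\}$ interchanged shows $Q^{\perp}\in{\mathcal G}_{\infty}(H)$ as well.

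It remains to verify the two intersection conditions, and this is where the choice of the rotation is used. An element $v\in Q$ can be written $v=\sum_{n}c_{n}u_{n}$ with $\sum_{n}|c_{n}|^{2}<\infty$, and its orthogonal projection onto $X$ equals $\tfrac{1}{\sqrt2}\sum_{n}c_{n}f_{n}$; if $v\in X^{\perp}$ this projection is $0$, and since $\{f_{n}\}$ is orthonormal we get $c_{n}=0$ for every $n$, i.e.\ $v=0$. Applying the identical argument to $Q^{\perp}=\overline{{\rm span}}\{u_{n}'\}$, whose elements also have projection onto $X$ a nonzero multiple of $\sum_{n}d_{n}f_{n}$, gives $Q^{\perp}\cap X^{\perp}=0$.

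There is no serious obstacle here; the only points that need attention are making sure the codimension of $Q$ (not merely its dimension) is infinite — which is immediate once one exhibits the explicit orthonormal basis $\{u_{n}'\}$ of $Q^{\perp}$ — and arranging that $u_{n}$ and $u_{n}'$ simultaneously have nonzero components along $X$, so that the two ``trivial intersection'' conclusions both follow from the same one-line argument.
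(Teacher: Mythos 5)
Your proof is correct and is essentially the same as the paper's: the author takes $Q$ to be the closed span of $\{\tfrac{1}{\sqrt{2}}(x_i+x'_i)\}$ for orthonormal bases $\{x_i\}$ of $X$ and $\{x'_i\}$ of $X^{\perp}$, notes that $\{\tfrac{1}{\sqrt{2}}(x_i-x'_i)\}$ is an orthonormal basis of $Q^{\perp}$, and leaves the rest as a ``direct verification.'' You have simply written out that verification (the coefficient argument showing $Q\cap X^{\perp}=Q^{\perp}\cap X^{\perp}=0$ and the check that $Q,Q^{\perp}\in{\mathcal G}_{\infty}(H)$) in full, which is fine.
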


\begin{proof}
Let $\{x_i\}_{i\in {\mathbb N}}$ and $\{x'_i\}_{i\in {\mathbb N}}$ be orthonormal bases of $X$ and $X^{\perp}$, respectively.
Denote by $Q$ the closed subspace whose orthonormal basis is $\{\frac{1}{\sqrt{2}}(x_i+x'_i)\}_{i\in {\mathbb N}}$. 
Then  $\{\frac{1}{\sqrt{2}}(x_i-x'_i)\}_{i\in {\mathbb N}}$ is an orthonormal basis of $Q^{\perp}$. 
A direct verification shows that $Q$ is as required. 
\end{proof}

\begin{proof}[Proof of Proposition \ref{prop}]
Let $X$ be an element of ${\mathcal G}_{\infty}(H)$ satisfying $O(X)=0$.

If $Y\in {\mathcal G}_{\infty}(H)$ intersects $X^{\perp}$ precisely in zero, then $O(Y)=0$.
Indeed, $Y$ does not contain a non-zero subspace orthogonal to $X$
and, consequently, $f(Y)$ does not contain a non-zero subspace orthogonal to $f(X)=L(X)$ which means that $O(Y)=0$
(since $O(Y)$ is orthogonal to the range of $L$).

Let $Q$ be as in Lemma \ref{lemmaZ}.
Then $Q\cap X^{\perp}=0$ which implies that $O(Q)=0$. 
Since $X^{\perp}\cap Q^{\perp}=0$ and $O(Q)=0$, we obtain  that $O(X^{\perp})=0$.

By the same arguments, $O(Y^{\perp})=0$ for every $Y\in {\mathcal G}_{\infty}(H)$ satisfying $O(Y)=0$.

If $Y\in {\mathcal G}_{\infty}(H)$ is contained in $X$, then $O(Y)=0$ (we have $O(Y)\subset O(X)$ by Theorem \ref{th1}).
If $Y\in {\mathcal G}_{\infty}(H)$ contains $X$, 
then $Y^{\perp}$ is contained in $X^{\perp}$;
since $O(X^{\perp})=0$, we obtain that $O(Y^{\perp})=0$. 
Then $O(Y)=O(Y^{\perp\perp})=0$.

The same arguments show that if $Y\in {\mathcal G}_{\infty}(H)$ satisfies $O(Y)=0$ and $Y'\in {\mathcal G}_{\infty}(H)$ is contained in $Y$ or contains $Y$,
then $O(Y')=0$.

Let $Y$ be an arbitrary element of ${\mathcal G}_{\infty}(H)$. 
If $X\cap Y$ belongs to ${\mathcal G}_{\infty}(H)$, 
then $O(X\cap Y)=0$ which implies that $O(Y)=0$.
In the case when $X\cap Y$ is finite-dimensional,
there is  $Z\in {\mathcal G}_{\infty}(H)$ such that $X\cap Z$ and $Y\cap Z$ both belong to ${\mathcal G}_{\infty}(H)$ \cite[Lemma 3.20]{Pankov-book}.
Then $O(X\cap Z)=0$ and, consequently, $O(Z)=0$.
Therefore, $O(Y\cap Z)=0$ which means that $O(Y)=0$.
\end{proof}

\begin{rem}\label{rem}{\rm
Suppose that $H$ is non-separable and 
for $X\in {\mathcal G}_{\infty}(H)$ there is $Q\in {\mathcal G}_{\infty}(H)$ such that both $Q,Q^{\perp}$ intersect $X^{\perp}$ precisely in $0$.
Then the orthogonal projections of $Q$ and $Q^{\perp}$ on $X$ are injective and, by \cite[Problem 56]{Halmos}, 
the dimensions of $Q$ and $Q^{\perp}$ are not greater than the dimension of $X$.
The latter is impossible if the dimension of $X$ is less than the dimension of $H$
(which happens in the non-separable case).
Therefore, if $H$ is non-separable, then Lemma \ref{lemmaZ} fails and  we are not able to prove Proposition \ref{prop}.
}\end{rem}

\end{document}